\def\ra{\rightarrow}
\def\lra{\longrightarrow}
\def\nat{\mbox{\it nat}}
\def\bool{\mbox{\it bool}}
\def\true{\mbox{\it true}}
\def\array{\mbox{\it array}}
\def\nil{\mbox{\it nil}}
\newbox\tempa
\newbox\tempb
\newdimen\tempc
\def\mud#1{\hfil $\displaystyle{\mathstrut #1}$\hfil}
\def\rig#1{\hfil $\displaystyle{#1}$}
\def\irulehelp#1#2#3{\setbox\tempa=\hbox{$\displaystyle{\mathstrut #2}$}%
		        \setbox\tempb=\vbox{\halign{##\cr
	\mud{#1}\cr
\noalign{\vskip\the\lineskip}%
	\noalign{\hrule height 0pt}%
	\rig{\vbox to 0pt{\vss\hbox to 0pt{${\; #3}$\hss}\vss}}\cr
	\noalign{\hrule}%
	\noalign{\vskip\the\lineskip}%
	\mud{\copy\tempa}\cr}}%
		      \tempc=\wd\tempb
		      \advance\tempc by \wd\tempa
		      \divide\tempc by 2 }
\def\irule#1#2#3{{\irulehelp{#1}{#2}{#3}%
		     \hbox to \wd\tempa{\hss \box\tempb \hss}}}
\newtheorem{definition}{Definition}[section]
\newtheorem{theorem}{Theorem}[section]
\newtheorem{lemma}{Lemma}[section]
\newenvironment{proof}{\noindent {\em Proof.}}{\medskip}
\newenvironment{example}{\noindent {\em Example.}}{\medskip}
\title{Interacting Safely with an Unsafe Environment}
\author{Gilles Dowek
\institute{Inria and ENS Paris-Saclay}
\email{gilles.dowek@ens-paris-saclay.fr}}
\begin{document}
\maketitle

\begin{abstract}
We give a presentation of Pure type systems where contexts need not be
well-formed and show that this presentation is equivalent to the usual
one.  The main motivation for this presentation is that, when we
extend Pure type systems with computation rules, like in the logical
framework {\sc Dedukti}, we want to declare the constants before the
computation rules that are needed to check the well-typedness of their
type.
\end{abstract}

\section{Introduction}

In the simply typed lambda-calculus, to assign a type to a term, we
first need to assign a type to its free variables. For instance, if we
assign the type $\nat \ra \nat \ra \nat$ to the variable $f$ and the
type $\nat$ to the variable $x$, then we can assign the type $\nat \ra
\nat$ to the term $\lambda y:\nat~(f~x~y)$.

Whether a type is assigned to $f$ before or after one is assigned to
$x$ is immaterial, so the context $\{f:\nat \ra \nat \ra \nat,
x:\nat\}$ does not need to be ordered.

\subsection{Well-formed Contexts}

In systems, such as the Calculus of constructions, where atomic types
are variables of a special type $*$, contexts are ordered and, for
instance, the term $\lambda y:\nat~(f~x~y)$ is assigned the type $\nat
\ra \nat$ in the context $\nat:*, f:\nat \ra \nat \ra \nat, x:\nat$
but not in the context $x:\nat, \nat:*, f:\nat \ra \nat \ra \nat$,
that is not well-formed.

In a well-formed context, the declarations are ordered in such a way
that the type of a variable only contains variables declared to its
left.  For instance, the context $\nat:*, z:\nat, \array : \nat \ra *,
\nil:(\array~z)$ is well-formed, but the context $\nat:*, z:\nat,
\nil:(\array~z), \array : \nat \ra *$ is not.  Moreover, in such a
well-formed context, each type is itself well-typed in the context
formed with the variable declarations to its left.  For instance, the
context $\nat:*, \array : \nat \ra *, z:\nat, \nil:(\array~z~z)$ is
not well-formed.  So, a context $x_1:A_1, ..., x_n:A_n$ is said to be
well-formed if, for each $i$, $x_1:A_1, ..., x_i:A_i \vdash A_{i+1}:s$
is derivable for some sort $s$ in $\{*, \Box\}$.

The original formulation of the Calculus of constructions of Coquand
and Huet \cite{CoquandHuet} has two forms of judgements: one
expressing that a context $\Gamma$ is well-formed and another
expressing that a term $t$ has a type $A$ in a context $\Gamma$.  Two
rules define when a context is well-formed
$$\irule{}
        {[~]~\mbox{well-formed}}
        {\mbox{(empty)}}$$
$$\irule{\Gamma \vdash A:s} {\Gamma, x:A~\mbox{well-formed}}
        {\mbox{(decl) $s \in \{*,\Box\}$}}$$
and one enables the assignment of a type to a variable, in a
well-formed context
$$\irule{\Gamma, x:A, \Gamma'~\mbox{well-formed}}
        {\Gamma, x:A, \Gamma' \vdash x:A}
        {\mbox{(var)}}$$
These three rules together with five others---(sort), (prod), (abs),
(app), and (conv)---form a eight-rule presentation of the Calculus of
constructions, and more generally of Pure type systems.  Because the
rule (var) requires the context $\Gamma, x:A, \Gamma'$ to be
well-formed, a variable can only be assigned a type in a well-formed
context and this property extends to all terms, as it is an invariant
of the typing rules.

This system was simplified by Geuvers and Nederhof
\cite{GeuversNederhof} and Barendregt \cite{Barendregt1992}, who use a
single form of judgement expressing that a term $t$ has a type $A$ in
a context $\Gamma$.  First, they drop the context $\Gamma'$ in the
rule (var) simplifying it to
$$\irule{\Gamma, x:A~\mbox{well-formed}}
        {\Gamma, x:A \vdash x:A}
        {}$$
and add a weakening rule
$$\irule{\Gamma \vdash t:A~~~\Gamma \vdash B:s}
        {\Gamma, x:B \vdash t:A}
        {\mbox{(weak)}}$$
to extend the judgement $\Gamma, x:A \vdash x:A$ to $\Gamma, x:A,
\Gamma' \vdash x:A$. Then, they exploit the fact that the conclusion
of the rule (decl) is now identical to the premise of the rule (var), to
coin a derived rule
$$\irule{\Gamma \vdash A:s} {\Gamma, x:A \vdash x:A} {\mbox{(start) $s
    \in \{*,\Box\}$}}$$ Now that the variables can be typed without
using a judgement of the form $\Gamma~\mbox{well-formed}$, such
judgements can be dropped, together with the rules (empty), (decl),
and (var). So the two rules (start) and (weak), together with the
five other rules form an equivalent seven-rule formulation of the Calculus of
constructions, and more generally of Pure type systems.

\subsection{Interacting Safely with an Unsafe Environment}
           
When a judgement of the form $\Gamma \vdash x:A$ is derived, the
well-typedness of the term $A$ needs to be checked. But it can be
checked either when the variable $x$ is added to the context or when
it is used in the derivation of the judgement $\Gamma \vdash x:A$.  In
the system with the rules (decl) and (var), it is checked in the rule
(decl), that is when the variable is added to the context. When the
rule (var) is replaced with the rule (start), it is checked when the
variable is used. These two systems illustrate two approaches to
safety: the first is to build a safe environment, the second is to
interact safely with a possibly unsafe environment.

In the formulation of Geuvers and Nederhof and Barendregt, it is still
possible to define a notion of well-formed context: the context
$x_1:A_1, ..., x_n:A_n$ is well-formed if for each $i$, $x_1:A_1, ...,
x_i:A_i \vdash A_{i+1}:s$ is derivable.  With such a definition, it is
possible to prove that if the judgement $\Gamma \vdash t:A$ is
derivable, then $\Gamma$ is well-formed. In this proof, the second
premise of the rule (weak), $\Gamma \vdash B:s$, is instrumental, as
its only purpose is to preserve the well-formedness of the context.

We can go further with the idea of interacting safely with an unsafe
environment and drop this second premise, leading to the weakening rule
$$\irule{\Gamma \vdash t:A}
        {\Gamma, x:B \vdash t:A}
        {}$$
Then, in the judgement $\Gamma \vdash x:A$,
nothing prevents the context $\Gamma$ from being non well-formed, but
the term $A$ is still well-typed because the rule (start), unlike the
rule (var), has a premise $\Gamma \vdash A:s$. In such a system, the
judgement $\nat:*, \array : \nat \ra *, z:\nat, \nil:(\array~z~z)
\vdash z:\nat$ is derivable, although the term $(array~z~z)$ is not
well-typed, but the judgement $\nat:*, \array : \nat \ra *, z:\nat,
\nil:(\array~z~z) \vdash \nil:(\array~z~z)$ is not because this term
$(array~z~z)$ is not well-typed.

Yet, with the rule (start) and this strong weakening rule, the
judgement $\nat:*, z:\nat, \nil:(\array~z), 
\linebreak
\array : \nat \ra * \vdash
\nil:(\array~z)$ is not derivable, because the judgement $\nat:*,
z:\nat \vdash (\array~z):*$ is not derivable. Thus, to make this
judgement derivable, we should not use a weakening rule that erases
all the declarations to the right of the declaration of $\nil$ and the
rule (start). But we should instead use a rule that keeps the full
context to type the term $(\array~z)$. Yet, like the rule (start),
this rule should not check that the context is well-formed, but that
the type of the variable is a well-typed term
$$\irule{\Gamma,x:A,\Gamma' \vdash A:s}
        {\Gamma,x:A,\Gamma' \vdash x:A}
        {\mbox{(var')}}$$
\begin{figure}
$$\irule{}
        {\Gamma \vdash s_1:s_2}
        {\mbox{(sort')}~~~\langle s_1,s_2 \rangle \in {\cal A}}$$
$$\irule{\Gamma,x:A,\Gamma' \vdash A:s}
        {\Gamma,x:A,\Gamma' \vdash x:A}
        {\mbox{(var')}~~~x \in {\cal V}_s}$$
$$\irule{\Gamma \vdash A:s_1~~~\Gamma, x:A \vdash B:s_2}
        {\Gamma \vdash (x:A) \ra B:s_3}
        {\mbox{(prod)}~~~\langle s_1,s_2,s_3 \rangle \in {\cal R}}$$
$$\irule{\Gamma \vdash A:s_1~~~\Gamma, x:A \vdash B:s_2~~~\Gamma, x:A \vdash t:B}
        {\Gamma \vdash \lambda x:A~t:(x:A) \ra B}
        {\mbox{(abs)}~~~\langle s_1,s_2,s_3 \rangle \in {\cal R}}$$
$$\irule{\Gamma \vdash t:(x:A) \ra B~~~\Gamma \vdash u:A}
        {\Gamma \vdash t~u:(u/x)B}
        {\mbox{(app)}}$$
$$\irule{\Gamma \vdash t:A~~~\Gamma \vdash B:s}
        {\Gamma \vdash t:B}
        {\mbox{(conv)}~~~A \equiv B}$$
\caption{Pure type systems with arbitrary contexts \label{typingrules}}
\end{figure}
This leads to the six-rule system described in Figure \ref{typingrules}.

As the order of declarations in a context is now immaterial, contexts
can indifferently be defined as sequences or as sets of declarations.

\subsection{Previous Work}

There are several reasons for using arbitrary contexts.  One of them
is that, as already noticed by Sacerdoti Coen \cite{Sacerdoti}, when
we have two contexts $\Gamma$ and $\Gamma'$, for instance developed by
different teams in different places, and we want to merge them, we
should not have to make a choice between $\Gamma, \Gamma'$, and
$\Gamma', \Gamma$. We should just be able to consider the unordered
context $\Gamma \cup \Gamma'$, provided it is a context, that is if
$x:A$ is declared in $\Gamma$ and $x:A'$ is declared in $\Gamma'$ then
$A = A'$.

Another is that, when we extend Pure type systems with computation
rules, like in the logical framework {\sc Dedukti}, we additionally
want to declare constants in a signature $\Sigma$ and then add
computation rules. For instance, we want to be able to declare
constants in a signature $\Sigma = \nat:*,a:\nat, b:\nat, P:\nat \ra
*, Q:(P~a) \ra *, e:(P~b), h:(Q~e), c:\nat$ and then computation rules
$a \lra c$, $b \lra c$.  Because, unlike in \cite{CD}, the term
$(Q~e)$ is not well-typed without the computation rules, we cannot
check the that the signature is well-formed before we declare the
rules. But, because the rules use the constants declared in $\Sigma$,
we cannot declare the rules before the signature, in particular the
rules do not make sense in the part of the signature to the left of
the declaration of $h$, that is in $\nat:*,a:\nat, b:\nat, P:\nat \ra
*, Q:(P~a) \ra *, e:(P~b)$, where the constant $c$ is missing. And,
because we sometimes want to consider rules $l \lra r$ where $l$ and
$r$ are not well-typed terms \cite{Blanqui}, we cannot interleave
constant declarations and computation rules. Note that in Blanqui's
Calculus of algebraic constructions \cite{Blanqui}, the contexts are
required to be well-formed, but the signatures are not.

Another source of inspiration is the presentation of Pure type systems
without explicit contexts \cite{GeuversKrebbersMcKinnaWiedijk}, where
Geuvers, Krebbers, McKinna, and Wiedijk completely drop contexts in
the presentation of Pure type systems.  In particular, Theorem
\ref{maintheorem} below is similar to their Theorem 19.  The
presentation of Figure \ref{typingrules} is however milder than their
Pure type systems without explicit contexts, as it does not change the
syntax of terms, avoiding, for instance, 
the question of the convertibility of $x^B$ and $x^{(\lambda
  \dot{A}:*\dot{A})~B}$. In particular, if $\Gamma \vdash t:A$ is
derivable in the usual formulation of Pure type systems, it is also
derivable in the system of Figure \ref{typingrules}.

We show, in this note, that the system presented in Figure
\ref{typingrules} indeed allows to interact safely with an unsafe
environment, in the sense that if a judgement $\Gamma \vdash t:A$ is
derivable in this system, then there exists $\Delta$, such that
$\Delta \subseteq \Gamma$ and $\Delta \vdash t:A$ is derivable with
the usual Pure type system rules. The intuition is that, because of
the rule (var'), the structure of a derivation tree induces a
dependency between the used variables of $\Gamma$ that is a partial
order, and as already noticed by Sacerdoti Coen \cite{Sacerdoti}, a
topological sorting of the used variables yields a linear context
$\Delta$.  Topological sorting is the key of Lemma \ref{curation}.

So this paper build upon the work of Coquand and Huet
\cite{CoquandHuet}, Geuvers and Nederhof \cite{GeuversNederhof},
Barendregt \cite{Barendregt1992}, Blanqui \cite{Blanqui}, Sacerdoti Coen
\cite{Sacerdoti}, and Geuvers, Krebbers, McKinna, and Wiedijk
\cite{GeuversKrebbersMcKinnaWiedijk}. Its main contribution is to show
that Pure type systems can be defined with six rules only, without a
primitive notion of well-formed context, and without changing the
syntax of terms.

\section{Pure Type Systems}

Let us first recall a usual definition of (functional) Pure type
systems \cite{GeuversNederhof,Barendregt1992}.  To define the syntax
of terms, we consider a set ${\cal S}$ of sorts and a family of ${\cal
  V}_s$ of infinite and disjoint sets of variables of sort $s$. The
syntax is then
$$t = x~|~s~|~(x:A) \ra B~|~\lambda x:A~t~|~t~u$$

A context $\Gamma$ is a sequence $x_1:A_1, ..., x_n:A_n$ of pairs
formed with a variable and a term, such that the variables $x_1, ...,
x_n$ are distinct.  So, when we write the context $\Gamma, y:B$, we
implicitly assume that $y$ is not already declared in $\Gamma$.

A context $\Gamma$ is said to be included into a context $\Gamma'$
($\Gamma \subseteq \Gamma'$) if every $x:A$ in $\Gamma$ is also in
$\Gamma'$.

Two contexts $\Gamma$ and $\Gamma'$ are said to be compatible if each
time $x:A$ is in $\Gamma$ and $x:A'$ is in $\Gamma'$, then $A = A'$.

To define the typing rule, we consider a set ${\cal A}$ of axioms,
that are pairs of sorts and a set ${\cal R}$ of rules, that are triple
of sorts.  As we restrict to functional Pure type systems, we assume
that the relations ${\cal A}$ and ${\cal R}$ are functional.

\begin{definition}[The type system ${\cal T}$]
$$\irule{}
        {\vdash s_1:s_2}
        {\mbox{(sort)}~~~\langle s_1,s_2 \rangle \in {\cal A}}$$
$$\irule{\Gamma \vdash A:s}
        {\Gamma, x:A \vdash x:A}
        {\mbox{(start)}~~~x\in {\cal V}_s}$$
$$\irule{\Gamma \vdash t:A~~~\Gamma \vdash B:s}
        {\Gamma, x:B \vdash t:A}
        {\mbox{(weak)}~~~x\in {\cal V}_s}$$
$$\irule{\Gamma \vdash A:s_1~~~\Gamma, x:A \vdash B:s_2}
        {\Gamma \vdash (x:A) \ra B:s_3}
        {\mbox{(prod)}~~~\langle s_1,s_2,s_3 \rangle \in {\cal R}}$$
$$\irule{\Gamma \vdash A:s_1~~~\Gamma, x:A \vdash B:s_2~~~\Gamma, x:A \vdash t:B}
        {\Gamma \vdash \lambda x:A~t:(x:A) \ra B}
        {\mbox{(abs)}~~~\langle s_1,s_2,s_3 \rangle \in {\cal R}}$$
$$\irule{\Gamma \vdash t:(x:A) \ra B~~~\Gamma \vdash u:A}
        {\Gamma \vdash t~u:(u/x)B}
        {\mbox{(app)}}$$
$$\irule{\Gamma \vdash t:A~~~\Gamma \vdash B:s}
        {\Gamma \vdash t:B}
        {\mbox{(conv)}~~~A \equiv B}$$
\end{definition}

\begin{example}
  Consider two sorts $*$ and $\Box$ and an axiom $*:\Box$.  The
  judgement $\nat:*, z:\nat \vdash z:\nat$ is derivable in ${\cal
    T}$. But the judgements $z:\nat, \nat:* \vdash z:\nat$ is not
  because $z$ is declared before $\nat$ and the judgement $\nat:*,
  x:(*~*), z:\nat \vdash z:\nat$ is not because $(*~*)$ is not
  well-typed.
\end{example}

\begin{definition}[Well-formed]
Well-formed contexts are inductively defined with the rules 
\begin{itemize}
\item the empty context is well-formed,
\item if $\Gamma$ is well-formed and $\Gamma \vdash A:s$ is derivable in ${\cal
  T}$, then $\Gamma, x:A$ is well-formed.
\end{itemize}
\end{definition}

\begin{lemma}
  If $\Gamma \vdash t:A$ is derivable, then $\Gamma$ is well-formed.
  Conversely, if $\Gamma$ is well-formed, then there exist two terms
  $t$ and $A$, such that $\Gamma \vdash t:A$ is derivable.
\end{lemma}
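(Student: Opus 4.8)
The plan is to prove the two implications separately: the first by induction on the derivation of $\Gamma \vdash t:A$ in ${\cal T}$, the second by a direct case analysis on $\Gamma$.

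For the first implication, I would show by induction on the derivation that its conclusion $\Gamma \vdash t:A$ always has a well-formed context $\Gamma$. The base case is the rule (sort), whose conclusion has the empty context, which is well-formed by the first clause of the definition. For the rules (prod), (abs), (app), and (conv), the context of the conclusion coincides with the context of (one of) the premises, so well-formedness of $\Gamma$ follows directly from the induction hypothesis. The two cases that extend the context are (start) and (weak). For (start), the premise is exactly $\Gamma \vdash A:s$; by induction hypothesis $\Gamma$ is well-formed, and since $\Gamma \vdash A:s$ is derivable the second clause of the definition yields that $\Gamma, x:A$ is well-formed. For (weak), the induction hypothesis applied to the first premise $\Gamma \vdash t:A$ gives that $\Gamma$ is well-formed, and the second premise $\Gamma \vdash B:s$ --- whose sole role is precisely this --- lets the second clause conclude that $\Gamma, x:B$ is well-formed.

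For the converse, I would distinguish whether $\Gamma$ is empty or not. If $\Gamma$ is empty, I pick an axiom $\langle s_1,s_2\rangle \in {\cal A}$ and apply the rule (sort) to derive $\vdash s_1:s_2$. If $\Gamma$ is non-empty, then, since the first clause of the definition of well-formedness produces only the empty context, $\Gamma$ must have been obtained by the second clause, that is $\Gamma = \Delta, x:A$ with $\Delta$ well-formed and $\Delta \vdash A:s$ derivable for some sort $s$; the rule (start) then immediately gives $\Delta, x:A \vdash x:A$, which is the required judgement with $t := x$ and type $A$.

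I do not expect a genuine obstacle here: the first direction is the standard fact that typing presupposes a well-formed context, and the second is essentially immediate from the definitions and does not even require an induction --- only an inversion on the inductive definition of well-formedness. The one point worth flagging is that the converse relies on the (implicit) assumption that ${\cal A}$ is non-empty; if it were empty, the empty context would still be well-formed but no judgement whatsoever would be derivable, so some such assumption is unavoidable, and under it the argument above goes through unchanged.
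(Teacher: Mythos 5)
Your proof is correct. The first direction is exactly the paper's argument: an induction on the derivation in which (sort) gives the empty context, (start) and (weak) extend well-formedness using their premise $\Gamma \vdash A:s$ (resp.\ $\Gamma \vdash B:s$), and the remaining rules pass the property along unchanged. For the converse you take a genuinely different route. The paper fixes an axiom $\langle s_1,s_2\rangle \in {\cal A}$ and derives $\Gamma \vdash s_1:s_2$ by starting from (sort) and weakening through the whole context, using at each step the judgement $\Delta \vdash A_i:s$ supplied by well-formedness as the second premise of (weak); this is a small induction on the structure of $\Gamma$. You instead invert the last clause of the well-formedness derivation and apply (start) once, producing $\Delta, x:A \vdash x:A$, falling back on (sort) only for the empty context. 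Your version is more economical (no induction, and it needs ${\cal A}$ to be non-empty only in the empty-context case, a dependency you rightly flag, whereas the paper's derivation of $\Gamma \vdash s_1:s_2$ needs it always); the paper's version has the minor virtue of producing a uniform witness $s_1:s_2$ for every well-formed context. One small caveat common to both arguments: the rules (start) and (weak) carry the side condition $x \in {\cal V}_s$, which the paper's definition of well-formed contexts does not record, so strictly speaking both your single application of (start) and the paper's iterated (weak) rely on the implicit convention that declared variables have the sort of their type; this is not a defect of your proof relative to the paper's.
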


\begin{proof}
We prove that $\Gamma$ is well-formed, by induction on the derivation
of $\Gamma \vdash t:A$. Conversely, if $\Gamma$ is well-formed and
$s_1$ and $s_2$ are two sorts, such that $\langle s_1, s_2 \rangle \in
{\cal A}$ then $\Gamma \vdash s_1:s_2$ is derivable with the rules
(sort) and (weak).
\end{proof}

We will use the two following lemmas.
The first is Lemma 18 in \cite{GeuversNederhof} and 5.2.12 in
\cite{Barendregt1992} and the second Lemma 26 in
\cite{GeuversNederhof} and 5.2.17 in \cite{Barendregt1992}.

\begin{lemma}[Thinning]
\label{thinningT}
If $\Gamma \vdash t:A$ is derivable, $\Gamma \subseteq \Gamma'$, and
$\Gamma'$ is well-formed, then $\Gamma' \vdash t:A$ is derivable.
\end{lemma}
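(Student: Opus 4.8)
The plan is to argue by induction on the derivation of $\Gamma \vdash t:A$. Before entering the case analysis, I would establish two auxiliary facts about an arbitrary well-formed context $\Gamma'$, both obtained by splitting $\Gamma'$ at a declaration and iterating the rule (weak) along the remaining suffix. First, for every axiom $\langle s_1,s_2\rangle \in {\cal A}$, the judgement $\Gamma' \vdash s_1:s_2$ is derivable; this is exactly the construction already used in the proof of the previous lemma, starting from $\vdash s_1:s_2$ by (sort) and adjoining the declarations of $\Gamma'$ one by one, each second premise of (weak) being supplied by the well-formedness of $\Gamma'$. Second, if $x:A \in \Gamma'$, then $\Gamma' \vdash x:A$ is derivable: writing $\Gamma' = \Delta, x:A, \Delta'$, the prefix $\Delta$ is well-formed with $\Delta \vdash A:s$ for some $s$, so $\Delta, x:A \vdash x:A$ by (start), and one again iterates (weak) along $\Delta'$.

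With these facts in hand, most cases of the induction are routine. For (sort), the context $\Gamma$ is empty and the conclusion is the first fact. For (start), $\Gamma = \Gamma_0, x:A$ with $x:A \in \Gamma \subseteq \Gamma'$, and the conclusion is the second fact — the induction hypothesis is not even needed here. For (weak), $\Gamma = \Gamma_0, x:B$ is obtained from $\Gamma_0 \vdash t:A$ and $\Gamma_0 \vdash B:s$; since $\Gamma_0 \subseteq \Gamma'$, the induction hypothesis applied to $\Gamma_0 \vdash t:A$ gives $\Gamma' \vdash t:A$ directly, the second premise being simply discarded. For (app) and (conv), the context is unchanged, so I would apply the induction hypothesis to each premise and reapply the same rule.

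The cases (prod) and (abs) are where the real work lies, because their premises live in the extended context $\Gamma, x:A$, and to invoke the induction hypothesis there one must exhibit a well-formed context containing $\Gamma, x:A$. Invoking the variable convention, I would first assume that the bound variable $x$ does not occur in $\Gamma'$ (renaming it otherwise). Applying the induction hypothesis to the premise $\Gamma \vdash A:s_1$ yields $\Gamma' \vdash A:s_1$, whence $\Gamma', x:A$ is well-formed and contains $\Gamma, x:A$; the induction hypothesis applied to $\Gamma, x:A \vdash B:s_2$ (and, for (abs), to $\Gamma, x:A \vdash t:B$) then provides the corresponding judgements over $\Gamma', x:A$, and reapplying (prod), respectively (abs), with the same rule $\langle s_1,s_2,s_3\rangle \in {\cal R}$ concludes. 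I expect the main obstacle to be essentially organisational: one must be careful to re-derive $\Gamma' \vdash A:s_1$ before extending the context and to handle the $\alpha$-renaming of $x$ so that $\Gamma', x:A$ remains a legitimate context; the bookkeeping around the iterated applications of (weak) in the two auxiliary facts is the other point demanding care. Beyond that, no individual step amounts to more than reapplying a typing rule.
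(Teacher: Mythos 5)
Your proof is correct and follows the standard argument: the paper does not prove this lemma itself but cites Geuvers--Nederhof (Lemma 18) and Barendregt (5.2.12), whose proofs proceed by exactly the same induction on the derivation, with the two auxiliary facts about well-formed contexts (iterated weakening for sorts and for declared variables) and the variable convention to keep $\Gamma', x:A$ a legitimate context in the (prod) and (abs) cases. The only point you leave implicit is that $\alpha$-renaming the bound variable preserves derivability with a derivation of the same height, so the induction hypothesis still applies after renaming; this is the usual silent appeal to the variable convention and is unproblematic here.
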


\begin{lemma}[Strengthening]\label{strengthening}
  If $\Gamma, x:A, \Gamma' \vdash t:B$ is derivable and $x$ does not
  occur in $\Gamma'$, $t$, and $A$, then $\Gamma, \Gamma' \vdash t:B$
  is derivable.
\end{lemma}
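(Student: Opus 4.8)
The plan is to argue by induction on the derivation of $\Gamma, x:A, \Gamma' \vdash t:B$, with $\Gamma$, $x$ and $A$ fixed and the statement generalized over the suffix $\Gamma'$, since the declaration $x:A$ keeps its position but the suffix grows and shrinks as one moves through the rules (start) and (weak). It is also convenient to strengthen the induction hypothesis with the conclusion that $x$ does not occur in $B$ either --- something that is anyway needed for the target judgement $\Gamma,\Gamma' \vdash t:B$ to be derivable at all.

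The rules (sort), (start), (weak), (prod) and (abs) are routine. For (sort) the context is empty and there is nothing to remove. For (start) and (weak) one splits on whether the last declaration of the conclusion is the one being removed: if it is $x:A$ itself, then $\Gamma'$ is empty and the relevant premise already is the desired judgement (for (start) this subcase is impossible, as it would force $t = x$); otherwise the removed declaration lies inside the contexts of the premises, and since $x$ occurs in none of $\Gamma'$, $t$ and $B$, one applies the induction hypothesis to the premises and re-applies the rule. For (prod) and (abs) the key point is that the domain of the product being formed, respectively the annotation of the abstraction being formed, occurs in the conclusion, so $x$ does not occur in it; hence the induction hypothesis applies to every premise and the rule is re-applied.

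The genuinely delicate cases are (app) and (conv), and this is where the real content of the lemma lies --- it is precisely what makes strengthening a nontrivial property of Pure type systems. In (app), from $\Gamma,x:A,\Gamma' \vdash t:(y:C)\ra D$ and $\Gamma,x:A,\Gamma' \vdash u:C$ with conclusion $\Gamma,x:A,\Gamma' \vdash t~u:(u/y)D$, the hypotheses tell us that $x$ occurs neither in $t$, nor in $u$, nor in $D$, but they do \emph{not} tell us that $x$ is absent from $C$: the argument type may genuinely mention $x$. Likewise, in (conv) with premises $\Gamma,x:A,\Gamma' \vdash t:C$ and $\Gamma,x:A,\Gamma' \vdash B:s$ and side condition $C \equiv B$, the intermediate type $C$ may contain $x$ although $t$ and $B$ do not. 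So the induction hypothesis cannot be fed to those premises as they stand.

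The way around this --- and the step I expect to be the main obstacle --- is to strengthen the induction still further so that one may replace such an offending intermediate type $C$ by a $\beta$-convertible type $C'$ in which $x$ does not occur, together with a derivation of the same judgement with $C$ replaced by $C'$. This rests on type correctness, on the Church--Rosser property of $\equiv$, and on an analysis of the possible shapes of typing derivations; it is exactly the argument carried out for Lemma 26 of \cite{GeuversNederhof} and Lemma 5.2.17 of \cite{Barendregt1992}, which I would follow. Once the intermediate types have been cleaned of $x$ in these two cases, the remaining reconstruction is immediate.
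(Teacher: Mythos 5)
The paper does not actually prove this lemma: it is imported as Lemma 26 of Geuvers--Nederhof and Lemma 5.2.17 of Barendregt, which is exactly where your sketch defers the genuinely hard (app) and (conv) cases, so your proposal is consistent with the paper's treatment and correctly locates where the difficulty lies. One small caution: as literally phrased in your first paragraph, adding ``$x$ does not occur in $B$'' to the induction \emph{conclusion} is not provable (the (conv) and (app) cases can introduce spurious occurrences of $x$ into the computed type); the correct strengthening is the existential one you give in your last paragraph, namely that there exists some $B'$ with $B' \equiv B$, $x$ not occurring in $B'$, and $\Gamma, \Gamma' \vdash t:B'$ derivable, from which the stated form follows by a final (conv) step.
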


\begin{lemma}[Strengthening contexts]\label{strengtheningcontexts}
If $\Gamma_1, x:A, \Gamma_2$ is well-formed and $x$ does not occur in
$\Gamma_2$ then $\Gamma_1, \Gamma_2$ is well-formed.
\end{lemma}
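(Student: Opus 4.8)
The plan is to argue by induction on the length of $\Gamma_2$, unfolding at each step the inductive definition of well-formedness. In the base case $\Gamma_2$ is empty, so $\Gamma_1, \Gamma_2 = \Gamma_1$; since $\Gamma_1, x:A$ is well-formed, the second clause of the definition of well-formedness gives directly that $\Gamma_1$ is well-formed, which is what we want.

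For the inductive step, write $\Gamma_2 = \Gamma_2', y:B$. The hypothesis that $x$ does not occur in $\Gamma_2$ gives in particular that $x$ occurs neither in $\Gamma_2'$ nor in $B$. The hypothesis that $\Gamma_1, x:A, \Gamma_2', y:B$ is well-formed gives, by the second clause, that $\Gamma_1, x:A, \Gamma_2'$ is well-formed and that $\Gamma_1, x:A, \Gamma_2' \vdash B:s$ is derivable in ${\cal T}$ for some sort $s$. By the induction hypothesis, $\Gamma_1, \Gamma_2'$ is well-formed. To transform the derivable judgement $\Gamma_1, x:A, \Gamma_2' \vdash B:s$ into $\Gamma_1, \Gamma_2' \vdash B:s$ using Strengthening (Lemma \ref{strengthening}), it remains to check the side condition that $x$ does not occur in $A$: this follows from the fact that $\Gamma_1, x:A$, being a prefix of the well-formed context $\Gamma_1, x:A, \Gamma_2'$, is itself well-formed, hence $\Gamma_1 \vdash A:s'$ is derivable for some sort $s'$, so the free variables of $A$ are all declared in $\Gamma_1$, and $x$ is not one of them because the variables of the context $\Gamma_1, x:A$ are distinct. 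Strengthening then yields $\Gamma_1, \Gamma_2' \vdash B:s$, and applying the second clause of the definition of well-formedness to this derivable judgement and to the well-formed context $\Gamma_1, \Gamma_2'$ shows that $\Gamma_1, \Gamma_2', y:B$, which is $\Gamma_1, \Gamma_2$, is well-formed.

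The only step that is not a bare unfolding of the definition is the discharge of the side condition ``$x$ does not occur in $A$'' of Lemma \ref{strengthening}, which rests on the standard free-variables property of ${\cal T}$ (the free variables of a derivably typed term and of its type are declared in the context) together with the easy observation that the prefixes of a well-formed context are well-formed; the rest of the argument is the expected induction on $\Gamma_2$.
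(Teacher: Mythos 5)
Your proof is correct and follows the same route as the paper: induction on $\Gamma_2$, with the inductive step combining the induction hypothesis with Lemma \ref{strengthening} applied to $\Gamma_1, x:A, \Gamma_2' \vdash B:s$. You are in fact more careful than the paper, which silently elides the side condition ``$x$ does not occur in $A$'' of Lemma \ref{strengthening}; your discharge of it via the free-variables property of ${\cal T}$ and the well-formedness of the prefix $\Gamma_1, x:A$ is a legitimate and welcome addition.
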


\begin{proof}
By induction
on the structure of $\Gamma_2$.  If $\Gamma_2$ is empty, then
$\Gamma_1, \Gamma_2 = \Gamma_1$ is well-formed.  Otherwise, $\Gamma_2
= \Gamma'_2, y:B$.  By induction hypothesis, $\Gamma_1, \Gamma'_2$ is
well-formed.  As $\Gamma_1, x:A, \Gamma'_2, y:B$ is well-formed,
$\Gamma_1, x:A, \Gamma'_2 \vdash B:s$ is derivable.  By Lemma
\ref{strengthening}, $\Gamma_1, \Gamma'_2 \vdash B:s$ is
derivable. Thus, $\Gamma_1, \Gamma'_2, y:B$ is well-formed.
\end{proof}

If $\Gamma_1$ and $\Gamma_2$ are two well-formed contexts with no variables
in common, then the concatenation $\Gamma_1, \Gamma_2$ also is well-formed.
This remark extend to the case where $\Gamma_1$ and $\Gamma_2$ have variables in common, but are compatible.

\begin{lemma}[Merging]\label{merge}
  If $\Gamma_1$ and $\Gamma_2$ are two well-formed compatible
  contexts, then there exists a well-formed context $\Gamma$, such
  that $\Gamma_1 \subseteq \Gamma$, $\Gamma_2 \subseteq \Gamma$, and
  $\Gamma \subseteq (\Gamma_1, \Gamma_2)$.
\end{lemma}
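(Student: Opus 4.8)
The plan is to build $\Gamma$ by induction on the structure of $\Gamma_2$, threading through a stronger statement so the induction goes through. At each step I would add one declaration of $\Gamma_2$ to a context already constructed, inserting it at a position that respects both the order inherited from $\Gamma_1$ and the order inherited from $\Gamma_2$. The key invariant to maintain is not just that the partially built $\Gamma$ is well-formed with $\Gamma_2' \subseteq \Gamma$ and $\Gamma \subseteq (\Gamma_1,\Gamma_2)$, but also that $\Gamma_1 \subseteq \Gamma$, so that when I later need to type a declaration $y:B$ of $\Gamma_2$ I know that all of $\Gamma_1$ — and hence everything needed to type $B$ — is already available.

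First I would dispose of the base case: if $\Gamma_2$ is empty, take $\Gamma = \Gamma_1$, which is well-formed by hypothesis, and the three inclusions are immediate. For the inductive step, write $\Gamma_2 = \Gamma_2', y:B$. Since $\Gamma_2$ is well-formed, $\Gamma_2' \vdash B:s$ is derivable for some sort $s$, and $\Gamma_2'$ is well-formed and still compatible with $\Gamma_1$, so the induction hypothesis gives a well-formed $\Gamma'$ with $\Gamma_1 \subseteq \Gamma'$, $\Gamma_2' \subseteq \Gamma'$, and $\Gamma' \subseteq (\Gamma_1,\Gamma_2')$. Now there are two cases. If $y$ is already declared in $\Gamma'$, then by compatibility — every variable of $\Gamma'$ comes from $\Gamma_1$ or $\Gamma_2'$, both compatible with $\Gamma_2$ — its type there must be $B$, so $\Gamma := \Gamma'$ already works: the inclusions $\Gamma_2 \subseteq \Gamma$ and $\Gamma \subseteq (\Gamma_1,\Gamma_2)$ follow from the ones for $\Gamma_2'$ together with $y:B \in \Gamma'$ and $y:B \in \Gamma_2$. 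If $y$ is not in $\Gamma'$, I would set $\Gamma := \Gamma', y:B$. Since $\Gamma_2' \subseteq \Gamma'$ and $\Gamma'$ is well-formed, Lemma~\ref{thinningT} applied to $\Gamma_2' \vdash B:s$ gives $\Gamma' \vdash B:s$, hence $\Gamma', y:B$ is well-formed; the three inclusions are then routine from the inductive ones.

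The main obstacle is the case analysis on whether $y$ already occurs in $\Gamma'$, and in particular the appeal to compatibility to pin down its type. One has to be careful that "compatible" is used with the right pair of contexts: $\Gamma'$ itself need not be a sub-context of $\Gamma_1$ or of $\Gamma_2$ in a way that makes compatibility with $\Gamma_2$ immediate, so I would phrase the argument as: if $y:C \in \Gamma'$ then, since $\Gamma' \subseteq (\Gamma_1,\Gamma_2')$, either $y:C \in \Gamma_1$ or $y:C \in \Gamma_2'$; in the first case $C = B$ by compatibility of $\Gamma_1$ and $\Gamma_2$, and in the second case $C = B$ because $\Gamma_2 = \Gamma_2', y:B$ is a context, so its variables are distinct — but wait, $y$ is the last variable of $\Gamma_2$, so it cannot also occur in $\Gamma_2'$; hence only the first subcase arises and $C = B$. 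The other delicate point is simply making sure the invariant $\Gamma_1 \subseteq \Gamma$ is genuinely preserved in both cases and is indeed what is needed to re-invoke the induction hypothesis on the next step; once the statement is set up with that extra clause, each step is a short application of Lemma~\ref{thinningT}.
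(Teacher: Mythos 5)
Your proof is correct and follows essentially the same route as the paper: induction on the structure of $\Gamma_2$, a case split on whether the new variable is already declared in the context $\Gamma'$ obtained from the induction hypothesis, an appeal to compatibility to rule out a clashing declaration, and an application of Lemma~\ref{thinningT} to type the new declaration in $\Gamma'$. Your extra care in tracking the invariant $\Gamma_1 \subseteq \Gamma$ and in pinning down the type of $y$ via $\Gamma' \subseteq (\Gamma_1,\Gamma_2')$ only makes explicit what the paper leaves implicit.
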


\begin{proof}
By induction on of $\Gamma_2$.
\begin{itemize}
\item If $\Gamma_2$ is empty, we take $\Gamma = \Gamma_1$.  The
  context $\Gamma$ is well-formed, $\Gamma_1 \subseteq \Gamma$,
  $\Gamma_2 \subseteq \Gamma$, and $\Gamma \subseteq (\Gamma_1,
  \Gamma_2)$.

\item If $\Gamma_2 = (\Gamma'_2, x:A)$, then $\Gamma'_2$ is
  well-formed and, by induction hypothesis, there exists a well-formed
  context $\Gamma'$, such that $\Gamma_1 \subseteq \Gamma'$,
  $\Gamma'_2 \subseteq \Gamma'$, and $\Gamma' \subseteq
  (\Gamma_1,\Gamma'_2)$.
\begin{itemize}
\item If $x:A \in \Gamma'$, then we take $\Gamma = \Gamma'$.  The
  context $\Gamma$ is well-formed, $\Gamma_1 \subseteq \Gamma$,
  $\Gamma_2 \subseteq \Gamma$, and $\Gamma \subseteq (\Gamma_1,
  \Gamma_2)$.

\item Otherwise, as $\Gamma_1$ and $\Gamma_2$ are compatible,
  $\Gamma'$ contains no other declaration of $x$. We take $\Gamma =
  \Gamma', x:A$.  We have $\Gamma_1 \subseteq \Gamma$, $\Gamma_2
  \subseteq \Gamma$, $\Gamma \subseteq (\Gamma_1, \Gamma_2)$.  By
  Lemma \ref{thinningT}, as $\Gamma'_2 \vdash A:s$, and $\Gamma'_2
  \subseteq \Gamma'$, and $\Gamma'$ is well-formed, $\Gamma' \vdash
  A:s$ is derivable , thus $\Gamma$ is well-formed.
\end{itemize}
\end{itemize}
\end{proof}

\begin{example}
Consider two sorts $*$ and $\Box$ and an axiom $*:\Box$.  If $\Gamma_1
= \nat:*, \bool:*, z:\nat$ and $\Gamma_2 = \bool:*, \true:\bool,
\nat:*$, the context $\Gamma$ is $\nat:*, \bool:*, z:\nat,
\true:\bool$.
\end{example}

\section{Arbitrary Contexts}

\begin{definition}[The type system ${\cal T}'$]
The system ${\cal T}'$ is formed with the rules of Figure
\ref{typingrules}.  With respect to the system ${\cal T}$, the rule
(sort) is replaced with the rule (sort'), the rule (start) is replaced
with the rule (var'), and the rule (weak) is dropped.
\end{definition}

\begin{example}
Consider two sorts $*$ and $\Box$ and an axiom $*:\Box$.  The judgement
$\nat:*, z:\nat \vdash z:\nat$ is derivable in ${\cal T}'$. So are the judgements
$z:\nat, \nat:* \vdash z:\nat$ and $\nat:*, x:(*~*), z:\nat \vdash
  z:\nat$.
\end{example}

\begin{lemma}[Thinning]
\label{thinningT'}
If $\Gamma$ and $\Gamma'$ are two contexts, such that $\Gamma \subseteq
\Gamma'$ and $\Gamma \vdash t:A$ is derivable in ${\cal T}'$, then
$\Gamma' \vdash t:A$ is derivable in ${\cal T}'$. 
\end{lemma}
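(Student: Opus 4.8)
The plan is to proceed by induction on the derivation of $\Gamma \vdash t:A$ in ${\cal T}'$, showing that each rule is compatible with enlarging the context. The statement is the thinning lemma for ${\cal T}'$, and the key structural feature that makes it much easier than thinning for ${\cal T}$ (Lemma \ref{thinningT}) is that ${\cal T}'$ has no primitive notion of well-formed context: there is no side condition on $\Gamma'$ to maintain, so no extra work is needed beyond pushing the hypothesis $\Gamma \subseteq \Gamma'$ through each rule.

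First I would handle the base cases. For (sort'), the judgement $\Gamma \vdash s_1:s_2$ holds for \emph{any} context, so in particular for $\Gamma'$. For (var'), the derivation of $\Gamma \vdash x:A$ comes from a subderivation of $\Gamma \vdash A:s$ where $x:A \in \Gamma$; since $\Gamma \subseteq \Gamma'$ we have $x:A \in \Gamma'$, and by the induction hypothesis applied to the subderivation, $\Gamma' \vdash A:s$ is derivable, so (var') gives $\Gamma' \vdash x:A$.

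For the inductive cases, the rules (prod), (abs), and (app) that extend the context in a premise — namely those with a premise of the form $\Gamma, x:A \vdash \cdots$ — require a small renaming step: I would first $\alpha$-rename the bound variable $x$ so that it does not occur in $\Gamma'$ (this is legitimate since variables of each sort are infinite, and the conclusion is unchanged up to the implicit convention that $\Gamma', x:A$ presupposes $x$ fresh for $\Gamma'$), and then observe that $\Gamma, x:A \subseteq \Gamma', x:A$, so the induction hypothesis applies to the premise with the extended context. The premises of the form $\Gamma \vdash \cdots$ are handled directly by the induction hypothesis with $\Gamma \subseteq \Gamma'$. The rule (conv) is immediate: apply the induction hypothesis to both premises, keeping the side condition $A \equiv B$. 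Reassembling each rule with the thinned premises yields $\Gamma' \vdash t:A$.

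The only point that needs care — and the closest thing to an obstacle — is the bookkeeping around the freshness of bound variables when we go under a binder, i.e. ensuring that writing $\Gamma', x:A$ is legitimate after $\Gamma$ has grown to $\Gamma'$. This is a standard $\alpha$-conversion argument and poses no real difficulty; once it is dispatched, every case is a one-line reassembly.
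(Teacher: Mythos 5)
Your proposal is correct and follows exactly the route the paper takes: the paper's proof is the one-line ``by induction on the derivation of $\Gamma \vdash t:A$ in ${\cal T}'$'', and your case analysis (including the $\alpha$-renaming of bound variables before extending the context in the (prod) and (abs) premises) is just the expected elaboration of that induction.
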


\begin{proof}
By induction on the derivation of $\Gamma \vdash t:A$ in ${\cal T}'$. 
\end{proof}

\begin{lemma}[Key lemma]\label{key}
If $\Gamma$ is well-formed and $\Gamma \vdash t:A$ is derivable in
${\cal T}'$, then $\Gamma \vdash t:A$ is derivable in ${\cal T}$.
\end{lemma}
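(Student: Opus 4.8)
The plan is to proceed by induction on the derivation of $\Gamma \vdash t:A$ in ${\cal T}'$, with the extra hypothesis that $\Gamma$ is well-formed. Most cases are immediate because the rules (prod), (abs), (app), and (conv) are common to ${\cal T}$ and ${\cal T}'$; the only subtlety there is that when we apply the induction hypothesis to a premise whose context is $\Gamma, x:A$, we must know that $\Gamma, x:A$ is well-formed. But this is exactly what the first premise $\Gamma \vdash A:s_1$ gives us, via the induction hypothesis: from $\Gamma \vdash A:s_1$ in ${\cal T}'$ with $\Gamma$ well-formed we get $\Gamma \vdash A:s_1$ in ${\cal T}$, and then $\Gamma, x:A$ is well-formed by the definition of well-formedness. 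So each of these cases is a routine reassembly of the ${\cal T}$-derivation from the ${\cal T}$-derivations obtained from the premises.

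The two interesting cases are (sort') and (var'). For (sort'), we have a conclusion $\Gamma \vdash s_1:s_2$ with $\langle s_1,s_2\rangle \in {\cal A}$ and no premises, while in ${\cal T}$ the rule (sort) only derives the judgement with an empty context. Here I would invoke the second part of the first lemma of the excerpt (the converse direction), which says that if $\Gamma$ is well-formed then $\Gamma \vdash s_1:s_2$ is derivable in ${\cal T}$ using (sort) and (weak); since $\Gamma$ is well-formed by hypothesis, we are done. For (var'), the ${\cal T}'$-derivation ends with premise $\Gamma \vdash A:s$ where $\Gamma$ is of the form $\Gamma_1, x:A, \Gamma_2$ and conclusion $\Gamma_1, x:A, \Gamma_2 \vdash x:A$. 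Since $\Gamma$ is well-formed, so is its prefix $\Gamma_1$, and hence $\Gamma_1, x:A$ is well-formed; thus by the (start) rule of ${\cal T}$ we obtain $\Gamma_1, x:A \vdash x:A$ in ${\cal T}$ — but only if we can derive $\Gamma_1 \vdash A:s$ in ${\cal T}$, which we can, since well-formedness of $\Gamma_1, x:A$ means precisely $\Gamma_1 \vdash A:s$ is derivable in ${\cal T}$. Then, since $\Gamma_1, x:A \subseteq \Gamma_1, x:A, \Gamma_2 = \Gamma$ and $\Gamma$ is well-formed, Lemma \ref{thinningT} (Thinning for ${\cal T}$) gives $\Gamma \vdash x:A$ in ${\cal T}$.

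I expect the main obstacle to be purely bookkeeping: keeping straight, in each inductive case, exactly which context must be shown well-formed before the induction hypothesis applies, and confirming that the first premise of the binding rules always supplies that fact. There is no deep difficulty — the strong weakening rule of ${\cal T}'$ has simply been absorbed into Thinning for ${\cal T}$ (Lemma \ref{thinningT}), which is why that lemma is exactly what the (var') and (sort') cases need. One should also note that in the (weak)-free system ${\cal T}'$ there is no (weak) case to treat, so the induction has one fewer case than one might first anticipate; the corresponding work reappears, localized, in the appeals to Lemma \ref{thinningT}.
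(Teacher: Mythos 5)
Your proof is correct and takes essentially the same approach as the paper's: the same induction on the ${\cal T}'$-derivation, the same treatment of (sort') via well-formedness of $\Gamma$ together with (sort) and (weak), and the same key observation that in the binding rules the first premise, transported to ${\cal T}$ by the induction hypothesis, re-establishes well-formedness of $\Gamma, x:A$ before the other premises are handled. The only cosmetic difference is that in the (var') case you finish with Thinning for ${\cal T}$ (Lemma \ref{thinningT}) where the paper applies (weak) directly along $\Gamma_2$; these amount to the same thing.
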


\begin{proof}
By induction on the derivation of $\Gamma \vdash t:A$ in ${\cal T}'$.

\begin{itemize}
\item If the derivation ends with the rule (sort'), then $t = s_1$, $A
  = s_2$, and $\langle s_1, s_2 \rangle \in {\cal A}$.  As $\Gamma$ is
  well-formed, $\Gamma \vdash s_1:s_2$ is derivable in ${\cal T}$ with
  the rules (sort) and (weak).

\item If the derivation ends with the rule (var'), then $t$ is a
  variable $x$, $\Gamma = \Gamma_1, x:A, \Gamma_2$, and $\Gamma \vdash
  A:s$ is derivable in the system ${\cal T}'$.  As $\Gamma$ is
  well-formed, $\Gamma_1 \vdash A:s'$ is derivable in ${\cal
    T}$. Thus, $\Gamma_1, x:A \vdash x:A$ is derivable in ${\cal T}$
  with the rule (start). And, as $\Gamma$ is well-formed, $\Gamma_1,
  x:A, \Gamma_2 \vdash x:A$ is derivable with the rule (weak).

\item If the derivation ends with the rule (prod), then $t = (x:C) \ra
  D$, $A = s_3$, $\Gamma \vdash C:s_1$ is derivable in ${\cal T}'$,
  $\Gamma, x:C \vdash D:s_2$ is derivable in ${\cal T}'$, and $\langle
  s_1, s_2, s_3 \rangle \in {\cal R}$.  Then, as $\Gamma$ is
  well-formed, by induction hypothesis, $\Gamma \vdash C:s_1$ is
  derivable in ${\cal T}$.  Thus, $\Gamma, x:C$ is well-formed and, by
  induction hypothesis again, $\Gamma, x:C \vdash D:s_2$ is derivable
  in ${\cal T}$. So, $\Gamma \vdash (x:C) \ra D:s_3$ is derivable in
  ${\cal T}$ with the rule (prod).

\item If the derivation ends with the rule (abs), then $t = \lambda
  x:C~u$, $A = (x:C) \ra D$, $\Gamma \vdash C:s_1$ is derivable in
  ${\cal T}'$, $\Gamma, x:C \vdash D:s_2$ is derivable in ${\cal T}'$,
  $\Gamma, x:C \vdash u:D$ is derivable in ${\cal T}'$, and $\langle
  s_1, s_2, s_3 \rangle \in {\cal R}$.  By induction hypothesis,
  $\Gamma \vdash C:s_1$ is derivable in ${\cal T}$.  Thus, $\Gamma,
  x:C$ is well-formed and, by induction hypothesis again, $\Gamma, x:C
  \vdash D:s_2$ is derivable in ${\cal T}$ and $\Gamma, x:C \vdash
  u:D$ is derivable in ${\cal T}$. So, $\Gamma \vdash \lambda
  x:C~u:(x:C) \ra D$ is derivable in ${\cal T}$ with the rule (abs).

\item If the derivation ends with the rule (app), then $t = u~v$, $A =
  (v/x)D$, $\Gamma \vdash u:(x:C) \ra D$ is derivable in ${\cal T}'$
  and $\Gamma \vdash v:C$ is derivable in ${\cal T}'$.  By induction
  hypothesis $\Gamma \vdash u:(x:C) \ra D$ is derivable in ${\cal T}$
  and $\Gamma \vdash v:C$ is derivable in ${\cal T}$. Hence $\Gamma
  \vdash u~v:(v/x)D$ is derivable in ${\cal T}$, with the rule (app).

\item If the derivation ends with the rule (conv), then $\Gamma \vdash
  t:C$ is derivable in ${\cal T}'$, $\Gamma \vdash A:s$ is derivable
  in ${\cal T}'$, and $C \equiv A$. By induction hypothesis, $\Gamma
  \vdash t:C$ is derivable in ${\cal T}$ and $\Gamma \vdash A:s$ is
  derivable in ${\cal T}$.  Thus, $\Gamma \vdash t:A$ is derivable in
  ${\cal T}$, with the rule (conv).
\end{itemize}
\end{proof}

\begin{lemma}[Reordering]\label{reordering}
Let $\Gamma$ be a context, $x$ a variable that does not occur in $\Gamma$,
and $\Gamma'$ a well-formed context, such that $\Gamma' \subseteq
(\Gamma, x:C)$ and $\Gamma' \vdash t:A$ is derivable in ${\cal
  T}'$. Then, there exists a well-formed context $\Gamma''$, such that
$\Gamma'', x:C \vdash t:A$ is derivable in ${\cal T}'$.
\end{lemma}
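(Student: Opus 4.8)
The plan is to case-split on whether the declaration $x:C$ occurs in $\Gamma'$, and in both cases to produce $\Gamma''$ by simply deleting $x:C$ from $\Gamma'$ (deleting nothing in the first case) and re-appending it on the right. The only two ingredients will be Thinning in ${\cal T}'$ (Lemma \ref{thinningT'}) and Strengthening contexts (Lemma \ref{strengtheningcontexts}).

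First I would record a preliminary observation used throughout: since $\Gamma' \subseteq (\Gamma, x:C)$, every declaration of $\Gamma'$ apart from $x:C$ is a declaration of $\Gamma$, and since $x$ does not occur in $\Gamma$ it occurs in none of those declarations; hence $x$ occurs in $\Gamma'$ only through the declaration $x:C$, if at all. (Recall also that $\Gamma'$, being well-formed, is a genuine ordered context, so speaking of the position of $x:C$ inside it makes sense.)

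Then the two cases. \textbf{If $x:C \notin \Gamma'$}, then $\Gamma' \subseteq \Gamma$, so $x$ does not occur in $\Gamma'$ and $\Gamma', x:C$ is a context; I take $\Gamma'' = \Gamma'$, which is well-formed by hypothesis, and since $\Gamma' \subseteq (\Gamma', x:C)$, Lemma \ref{thinningT'} turns $\Gamma' \vdash t:A$ into $\Gamma'', x:C \vdash t:A$ in ${\cal T}'$. \textbf{If $x:C \in \Gamma'$}, I write $\Gamma' = \Gamma_1, x:C, \Gamma_2$; by the preliminary observation $x$ occurs neither in $\Gamma_1$ nor in $\Gamma_2$, so by Lemma \ref{strengtheningcontexts} the context $\Gamma'' := \Gamma_1, \Gamma_2$ is well-formed, and $\Gamma'', x:C$ is a context. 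Since $\Gamma' = \Gamma_1, x:C, \Gamma_2$ and $\Gamma'', x:C = \Gamma_1, \Gamma_2, x:C$ have exactly the same declarations, $\Gamma' \subseteq (\Gamma'', x:C)$, and Lemma \ref{thinningT'} again gives $\Gamma'', x:C \vdash t:A$ in ${\cal T}'$. In both cases the $\Gamma''$ produced satisfies moreover $\Gamma'' \subseteq \Gamma$, which is what a downstream topological-sorting argument will want.

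The step that needs the most care is the application of Lemma \ref{strengtheningcontexts}: it requires $x$ not to occur \emph{anywhere} in $\Gamma_2$, not merely to be absent as the subject of a declaration there, and this is precisely what the preliminary observation extracts from the hypothesis that $x$ does not occur in $\Gamma$. Everything else is bookkeeping with $\subseteq$ and two appeals to Thinning in ${\cal T}'$, so I do not expect any real difficulty.
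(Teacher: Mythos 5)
Your proof is correct and follows essentially the same route as the paper's: the same case split on whether $x:C$ occurs in $\Gamma'$, with Lemma \ref{strengtheningcontexts} handling the removal of $x:C$ from the middle of $\Gamma'$ and Lemma \ref{thinningT'} re-appending it on the right. The preliminary observation that $x$ can only occur in $\Gamma'$ via the declaration $x:C$ itself is exactly the point the paper makes (via $\Gamma'_2 \subseteq \Gamma$) to justify applying strengthening.
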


\begin{proof}
  If $x:C$ is in $\Gamma'$, then we have $\Gamma' = \Gamma'_1, x:C,
\Gamma'_2$, and as $\Gamma'_2 \subseteq \Gamma$, $x$ does not occur in
$\Gamma'_2$.  We take $\Gamma'' = \Gamma'_1, \Gamma'_2$.  By Lemma
\ref{strengtheningcontexts}, $\Gamma''$ is well-formed and, by Lemma
\ref{thinningT'}, $\Gamma'', x:C \vdash t:A$ is derivable in ${\cal
  T}'$.

Otherwise, we take $\Gamma'' = \Gamma'$. This context
  is well-formed and, by Lemma \ref{thinningT'},
$\Gamma'', x:C \vdash t:A$ is derivable in ${\cal T}'$.

\end{proof}

\begin{lemma}[Context curation]
\label{curation}
If $\Gamma \vdash t:A$ is derivable in ${\cal T}'$, then there exists
a well-formed context $\Delta$, such that $\Delta \subseteq \Gamma$
and $\Delta \vdash t:A$ is derivable in ${\cal T}'$.
\end{lemma}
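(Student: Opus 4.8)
The plan is to argue by induction on the derivation of $\Gamma \vdash t:A$ in ${\cal T}'$, curating the context of each premise with the induction hypothesis and then recombining the results using the Merging lemma (Lemma \ref{merge}), the Reordering lemma (Lemma \ref{reordering}), and Thinning for ${\cal T}'$ (Lemma \ref{thinningT'}).

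The rule (sort') is trivial: take $\Delta$ to be the empty context. For (app) and (conv) the two premises share the context $\Gamma$, so the induction hypothesis yields well-formed $\Delta_1, \Delta_2 \subseteq \Gamma$ deriving them in ${\cal T}'$; since both are included in $\Gamma$ they are compatible, Lemma \ref{merge} produces a well-formed $\Delta$ with $\Delta_1, \Delta_2 \subseteq \Delta \subseteq (\Delta_1,\Delta_2)$, hence $\Delta \subseteq \Gamma$, and Thinning carries both premises to $\Delta$ so the rule can be reapplied. For (var'), where $\Gamma = \Gamma_1, x:A, \Gamma_2$ and the premise is $\Gamma \vdash A:s$, the induction hypothesis gives a well-formed $\Delta_0 \subseteq \Gamma$ with $\Delta_0 \vdash A:s$ in ${\cal T}'$, and I would distinguish two cases. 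If $x$ is not declared in $\Delta_0$, then $\Delta_0 \subseteq \Gamma_1, \Gamma_2$, and by the Key lemma (Lemma \ref{key}) $\Delta_0 \vdash A:s$ in ${\cal T}$, so $\Delta := \Delta_0, x:A$ is well-formed and included in $\Gamma$, and it derives $x:A$ by (var') after thinning the premise into it. If $x$ is declared in $\Delta_0$, then (as $\Delta_0 \subseteq \Gamma$) its declaration is $x:A$, so $\Delta_0 = \Delta'_0, x:A, \Delta''_0$; then $\Delta := \Delta'_0, x:A$ is a prefix of the well-formed $\Delta_0$, hence well-formed, its initial segment $\Delta'_0$ types $A$ because $\Delta_0$ is well-formed, and again $\Delta \vdash x:A$ follows by (var') after thinning.

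The delicate cases are (prod) and (abs). There one premise is typed under the extended context $\Gamma, x:C$ rather than $\Gamma$, so the context returned for it by the induction hypothesis may carry the declaration $x:C$ in its interior. I would apply the Reordering lemma to that context --- with the $\Gamma$ of Lemma \ref{reordering} taken to be the conclusion's context $\Gamma$, which contains no occurrence of $x$ (renaming the bound variable if need be) --- to obtain a well-formed context ending in $x:C$, whose initial segment is included in $\Gamma$, that still derives the premise. I would then merge this initial segment with the curated contexts of the remaining premises (all included in $\Gamma$, hence pairwise compatible), using Lemma \ref{merge} once for (prod) and twice for (abs), to get a well-formed $\Delta \subseteq \Gamma$ containing all of them. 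Since $x$ is not declared in $\Gamma$, it is not declared in $\Delta$, so $\Delta, x:C$ is a legitimate context; Thinning then lifts every premise to $\Delta$ (respectively $\Delta, x:C$), and reapplying (prod) or (abs) gives $\Delta \vdash t:A$.

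The principal obstacle is exactly this bookkeeping in the binding rules: relocating the bound variable's declaration to the tail of a well-formed context (Lemma \ref{reordering}) and fusing the several curated contexts of the premises into one well-formed context (Lemma \ref{merge}) that still contains enough to type everything. The rest is routine once Thinning for ${\cal T}'$ is available, together with the harmless observation --- noted in the introduction --- that every ${\cal T}$-derivation is also a ${\cal T}'$-derivation, which is used when feeding the Key lemma's output back into ${\cal T}'$ in the (var') case.
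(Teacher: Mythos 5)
Your proposal is correct and follows essentially the same route as the paper's proof: induction on the derivation, with Reordering to push the bound variable's declaration to the end in the (prod) and (abs) cases, Merging to combine the curated contexts of the premises, Thinning to lift everything into the merged context, and the Key lemma to justify well-formedness when $x:A$ is appended in the (var') case. The only divergence is cosmetic: in the (var') sub-case where $x$ is already declared in the curated context, the paper keeps that whole context and reuses its derivation of $A:s$ directly, whereas you truncate to the prefix ending at $x:A$ (which also works, at the cost of invoking the observation that ${\cal T}$-derivations are ${\cal T}'$-derivations).
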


\begin{proof}
By induction on the derivation of $\Gamma \vdash t:A$.
\begin{itemize}
\item If the derivation ends with the rule (sort'), then $t = s_1$ and
  $A = s_2$, such that $\langle s_1, s_2 \rangle \in {\cal A}$.  We
  take the empty context for $\Delta$, $\Delta \subseteq \Gamma$,
  $\Delta$ is well-formed, and $\Delta \vdash s_1:s_2$ is derivable in
  ${\cal T}'$, with the rule (sort').

\item If the derivation ends with the rule (var'), then $t$ is a
  variable $x$, $x:A$ is an element of $\Gamma$ and $\Gamma \vdash
  A:s$ is derivable in ${\cal T}'$.  By induction hypothesis, there
  exists a well-formed context $\Delta_1$, such that $\Delta_1
  \subseteq \Gamma$ and $\Delta_1 \vdash A:s$ is derivable in ${\cal
    T}'$.
  
If $x:A$ is an element of $\Delta_1$, we take $\Delta = \Delta_1$.  We
have $\Delta \subseteq \Gamma$ and $\Delta$ is well-formed. Moreover
$\Delta \vdash A:s$ is derivable in ${\cal T}'$ and $\Delta$ contains
$x:A$, thus $\Delta \vdash x:A$ is derivable in ${\cal T}'$, with the
rule (var').

Otherwise, as $\Delta_1 \subseteq \Gamma$, $\Delta_1$ contains no
declaration of $x$, we take $\Delta = \Delta_1, x:A$.  We have $\Delta
\subseteq \Gamma$.  By Lemma \ref{key}, $\Delta_1 \vdash A:s$ is
derivable in ${\cal T}$, thus $\Delta$ is well-formed.  Moreover, by
Lemma \ref{thinningT'}, the judgement $\Delta \vdash A:s$ is derivable
in ${\cal T}'$ and, as $\Delta$ contains $x:A$, $\Delta \vdash x:A$ is
derivable in ${\cal T}'$, with the rule (var').

\item If the derivation ends with the rule (prod) then $t = (x:C) \ra
  D$, $A = s_3$, the contexts $\Gamma \vdash C:s_1$ and $\Gamma, x:C
  \vdash D:s_2$ are derivable in ${\cal T}'$, and $\langle s_1, s_2,
  s_3 \rangle \in {\cal R}$.  Modulo $\alpha$-equivalence, we can
  assume that $x$ does not occur in $\Gamma$.  By induction hypothesis,
  there exist two well-formed contexts $\Gamma_1$ and $\Gamma_2$, such
  that $\Gamma_1 \subseteq \Gamma$, $\Gamma_2 \subseteq (\Gamma,
  x:C)$, and the judgements $\Gamma_1 \vdash C:s_1$ and $\Gamma_2
  \vdash D:s_2$ are derivable in ${\cal T}'$.

By Lemma \ref{reordering}, there exists a well-formed context
$\Gamma'_2$ such that
$\Gamma'_2, x:C \vdash D:s_2$ is derivable in
${\cal T}'$.
As $\Gamma_1$ and $\Gamma'_2$ contain no declaration of $x$, 
by Lemma \ref{merge}, there exists a well-formed context $\Delta$,
such that $\Gamma_1 \subseteq \Delta$,
$\Gamma'_2 \subseteq \Delta$, and $\Delta$ contains no declaration of $x$.
We have $\Gamma_1 \subseteq \Delta$ and
$\Gamma'_2,x:C \subseteq \Delta,x:C$.  Thus, by Lemma
\ref{thinningT'}, $\Delta \vdash C:s_1$ and $\Delta, x:C \vdash D:s_2$
are derivable in ${\cal T}'$.  Thus, $\Delta \vdash (x:C) \ra D:s_3$
is derivable in ${\cal T}'$, with the rule (prod).

\item If the derivation ends with the rule (abs), then $t =
  \lambda x:C~u$, $A = (x:C) \ra D$, the judgements $\Gamma \vdash
  C:s_1$, $\Gamma, x:C \vdash D:s_2$, and $\Gamma, x:C \vdash u:D$ are
  derivable in ${\cal T}'$, and $\langle s_1, s_2, s_3 \rangle \in
  {\cal R}$.  Modulo $\alpha$-equivalence, we can assume that $x$ does
  not occur in $\Gamma$.  By induction hypothesis, there exist three
  well-formed contexts $\Gamma_1$, $\Gamma_2$, and $\Gamma_3$, such
  that $\Gamma_1 \subseteq \Gamma$, $\Gamma_2 \subseteq (\Gamma,x:C)$,
  $\Gamma_3 \subseteq (\Gamma,x:C)$, and the judgements $\Gamma_1
  \vdash C:s_1$, $\Gamma_2 \vdash D:s_2$, and $\Gamma_3 \vdash u:D$
  are derivable in ${\cal T}'$.

By Lemma \ref{reordering}, there exists well-formed contexts
$\Gamma'_2$ and $\Gamma'_3$, such that the judgements $\Gamma'_2, x:C
\vdash D:s_2$ and $\Gamma'_3, x:C \vdash u:D$ are derivable in ${\cal
  T}'$.  As $\Gamma_1$, $\Gamma'_2$, and $\Gamma'_3$ contain no
declaration of $x$, using Lemma \ref{merge} twice, there exists a
well-formed context $\Delta$, such that $\Gamma_1 \subseteq \Delta$,
$\Gamma'_2 \subseteq \Delta$, $\Gamma'_3 \subseteq \Delta$, and
$\Delta$ contains no declaration of $x$.  We have $\Gamma_1 \subseteq
\Delta$, $\Gamma'_2,x:C \subseteq \Delta,x:C$, and $\Gamma'_3,x:C
\subseteq \Delta,x:C$. Thus, by Lemma \ref{thinningT'}, the judgements
$\Delta \vdash C:s_1$, $\Delta,x:C \vdash D:s_2$, and $\Delta,x:C
\vdash u:D$ are derivable in ${\cal T}'$.  Thus, $\Delta \vdash
\lambda x:C~u:(x:C) \ra D$ is derivable in ${\cal T}'$, with the
rule (abs).

\item If the derivation ends with the rule (app) then $t = u~v$,
  $A = (v/x)D$, and the judgements $\Gamma \vdash u:(x:C) \ra D$ and
  $\Gamma \vdash v:C$ are derivable in ${\cal T}'$.  By induction
  hypothesis, there exists two well-formed contexts $\Gamma_1$ and
  $\Gamma_2$, such that $\Gamma_1 \subseteq \Gamma$, $\Gamma_2
  \subseteq \Gamma$, and the judgements $\Gamma_1 \vdash u:(x:C) \ra
  D$ and $\Gamma_2 \vdash v:C$ are derivable in ${\cal T}'$.

By Lemma \ref{merge}, there exists a well-formed context $\Delta$,
such that $\Gamma_1 \subseteq \Delta$, and $\Gamma_2 \subseteq
\Delta$.  By Lemma \ref{thinningT'}, the judgements $\Delta \vdash
u:(x:C) \ra D$ and $\Delta \vdash v:C$ are derivable in ${\cal T}'$.
Thus, $\Delta \vdash (u~v):(v/x)D$ is derivable in ${\cal T}'$, with
the rule (app).

\item If the derivation ends with the rule (conv) then the
  judgements $\Gamma \vdash t:C$ and $\Gamma \vdash A:s$ are derivable
  in ${\cal T}'$, and $C \equiv A$.  By induction hypothesis, there
  exists two well-formed contexts $\Gamma_1$ and $\Gamma_2$, such that
  $\Gamma_1 \subseteq \Gamma$, $\Gamma_2 \subseteq \Gamma$, and the
  judgements $\Gamma_1 \vdash t:C$ and $\Gamma_2 \vdash A:s$ are
  derivable in ${\cal T}'$.

By Lemma \ref{merge}, there exists a well-formed context $\Delta$,
such that $\Gamma_1 \subseteq \Delta$ and $\Gamma_2 \subseteq \Delta$.
By Lemma \ref{thinningT'}, the judgements $\Delta \vdash t:C$ and
$\Delta \vdash A:s$ are derivable in ${\cal T}'$.  Thus, $\Delta
\vdash t:A$ is derivable in ${\cal T}'$, with the rule (conv).
\end{itemize}
\end{proof}

\begin{theorem}\label{maintheorem}
If $\Gamma \vdash t:A$ is derivable in ${\cal T}'$, then there exists
$\Delta$, such that $\Delta \subseteq \Gamma$ and $\Delta \vdash t:A$
is derivable in ${\cal T}$.
\end{theorem}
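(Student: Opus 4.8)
The plan is to read the theorem off as an immediate consequence of the two lemmas just established, Lemma \ref{curation} and Lemma \ref{key}. First I would apply Context curation to the hypothesis, replacing the possibly ill-formed context $\Gamma$ by a well-formed sub-context that still supports the same judgement in ${\cal T}'$; then, since that sub-context is well-formed, I would invoke the Key lemma to transport the derivation from ${\cal T}'$ to ${\cal T}$.

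Concretely: assume $\Gamma \vdash t:A$ is derivable in ${\cal T}'$. By Lemma \ref{curation} there is a well-formed context $\Delta$ with $\Delta \subseteq \Gamma$ and $\Delta \vdash t:A$ derivable in ${\cal T}'$. Since $\Delta$ is well-formed and $\Delta \vdash t:A$ is derivable in ${\cal T}'$, Lemma \ref{key} yields that $\Delta \vdash t:A$ is derivable in ${\cal T}$. This $\Delta$ is exactly the context demanded by the statement, so the proof is complete.

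Because all the real work is delegated to the two lemmas, there is essentially no obstacle at this final step. The genuine difficulty lies upstream, in Lemma \ref{curation}: in the (prod) and (abs) cases the induction hypotheses produce several distinct well-formed sub-contexts, one per premise, each only included in $\Gamma$ or in $\Gamma, x:C$, and these must be reconciled into a single well-formed $\Delta$. That reconciliation is achieved by the Reordering lemma (moving the bound variable $x$ to the far right, via Lemma \ref{strengtheningcontexts}) followed by the Merging lemma (which, in effect, performs the topological sort of the dependency order among the used variables), after which Thinning in ${\cal T}'$ re-derives each premise over $\Delta$. With those pieces in hand, the theorem itself follows in two lines, and I would present it as such rather than redoing any of that machinery here.
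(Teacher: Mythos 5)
Your proof is correct and is exactly the paper's argument: apply Lemma \ref{curation} to obtain a well-formed $\Delta \subseteq \Gamma$ with $\Delta \vdash t:A$ derivable in ${\cal T}'$, then apply Lemma \ref{key} to transfer that derivation to ${\cal T}$. Your added commentary on where the real work lies (in Lemma \ref{curation}) is accurate but not needed for this step.
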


\begin{proof}
By Lemma \ref{curation}, there exists a well-formed context $\Delta$,
such that $\Delta \subseteq \Gamma$ and $\Delta \vdash t:A$ is
derivable in ${\cal T}'$.  By Lemma \ref{key}, $\Delta \vdash t:A$ is
derivable in ${\cal T}$.
\end{proof}

\begin{example}
Consider two sorts $*$ and $\Box$ and an axiom $*:\Box$.  
From the derivation of the judgement,
$z:\nat, \nat:* \vdash z:\nat$, 
we extract the context $\nat:*, z:\nat$. 

And from the derivation of $\nat:*, x:(*~*), z:\nat \vdash z:\nat$, we
also extract the context $\nat:*, z:\nat$.
\end{example}

\section*{Acknowledgements}

The author wants to thank Fr\'ed\'eric Blanqui, Herman Geuvers, and
Claudio Sacerdoti Coen for useful and lively discussions about the
various presentations of type theory.

\bibliographystyle{eptcs}
\bibliography{unsafe}
\end{document}